\newtheorem{theorem}{Theorem}
\newcommand{\bra}[1]{\langle #1|}
\newcommand{\proj}[1]{\ket{#1}\bra{#1}}
\newcommand{\op}[2]{\ket{#1}\bra{#2}}
\newcommand{\ket}[1]{|#1\rangle}
\newcommand{\braket}[2]{ \langle #1 | #2 \rangle}
\newtheorem{lemma}{Lemma}
\newcommand{\floor}[1]{\left\lfloor #1 \right\rfloor} 
\newcommand{\ceil}[1]{\left\lceil#1 \right\rceil} 
\begin{document}

\title{Probability in many-worlds theories}

\author{Anthony J. Short}\email{tony.short@bristol.ac.uk}
\affiliation{H.H. Wills Physics Laboratory, University of Bristol, Tyndall Avenue, Bristol, BS8 1TL, U.K.}

\begin{abstract}
We consider how to define a natural probability distribution over worlds within a simple class of deterministic many-worlds theories. This can help us understand the typical properties of worlds within such states, and hence explain the empirical success of quantum theory within a many-worlds framework. We give three reasonable axioms which lead to the Born rule in the case of quantum theory, and also yield natural results in other cases, including a many-worlds variant of classical stochastic dynamics. 
\end{abstract}

\maketitle

\section{Introduction}

Despite the amazing empirical success of quantum theory, its implications for the nature of reality remain controversial. From a realist perspective, the key issue is to replace the measurement postulates of textbook quantum theory with a more objective and well-defined structure. Supplementing the theory with hidden variables \cite{deBroglie, Bohm52}, or including  spontaneous collapse  laws \cite{GRW, Pearle, CSL} are two possible approaches. However, arguably the simplest approach, initially proposed by Everett \cite{Everett}, is to drop the measurement postulates altogether. The theory then simply describes a vector in Hilbert space undergoing unitary evolution. Considering the kinds of unitary interaction that constitute measurements, we find that they can be described by a branching of the initial state  into a superposition of  `worlds' each containing a different measurement result. This is the many-worlds interpretation of quantum theory \cite{Everett, DeWitt}. Observers in each world will see a distinct result, hence on a qualitative level this  is consistent with our experiences. However, a  key challenge is  to recover the probabilistic predictions of quantum theory within this deterministic many-worlds setting.  In particular,  we want to explain why the observed relative frequencies of  quantum results in our world, for a huge variety of different scenarios, are very close to those predicted by the Born rule \cite{Born} (i.e. $\textrm{relative frequency} \approx |\textrm{quantum amplitude}|^2 $).

The approach we pursue here is to argue for a `natural'   way to pick one world at random from  a many-worlds state. i.e. a natural probability distribution over worlds. 
We can then consider which worlds are typical with respect to this probability distribution. If worlds  containing relative frequencies in quantum experiments consistent with the Born rule are typical, this would offer an explanation for our observations - that we are living in a typical world, or are a typical instance of ourself. A probability distribution over worlds can also be interpreted as a measure, allowing one to discuss properties of `most' or `almost all' worlds. The theory predicts that  atypical worlds also exist, in which other instances of ourself have seen strange results, and perhaps do not believe quantum theory. However, this seems comparable to the situation in standard quantum theory, where we explain the observed measurement results by noting that they are typical with respect to the Born rule, and concede that it is possible we could have obtained very strange results instead, and perhaps arrived at a different theory. 

 We identify three reasonable axioms, not specific to quantum theory, which we would expect  a natural probability distribution over worlds to obey. In the case of quantum theory, we find that the unique probability distribution obeying those axioms is  that given by the Born rule.  If the axioms we require for naturalness are convincing, this could therefore explain the empirical success of the Born rule. 

As the probabilistic axioms are intended to be general in nature, we  consider a simple class of many-worlds theories which includes other possibilities in addition to quantum theory.  We consider four specific  models within this framework, and highlight  any differences in the resulting probability distributions, as well as one case in which a natural probability distribution obeying our axioms does not exist. 

In order to focus solely on the issue of probability, rather than on precisely how to decompose the state into different worlds (often known as the preferred basis problem), we consider a toy-theory  in which the set of worlds are given. Intuitively, we can think of these worlds as states which could be understood classically on a macroscopic scale (e.g. tables and pointers have well-defined locations, and  observers have well-defined experiences). These states are important in relating the content of the theory to our experiences, and are related to the locality of physical interactions and decoherence \cite{Wallace10a}. 

Previous approaches to understanding probability in  many-worlds include Everett's original work \cite{Everett}, which also sought to define a natural measure over worlds in order to consider a typical observer, although Everett's assumptions (that the measure depends only on the corresponding amplitude, and is additive when several  orthogonal states are relabelled as a single state) are not completely intuitive. Typicality is also considered in a very different approach based on matter density \cite{Allori}. Vaidman discusses many different approaches to deriving the Born rule in \cite{vaidman20}. He has proposed the `measure of existence of a world' \cite{vaidman1} and notes that deriving the Born rule requires additional assumptions, that may be based on symmetries \cite{Vaidman12} or other natural properties \cite{vaidman20}. Zurek \cite{Zurek1, Zurek2} uses envariance to derive the Born rule (which involves symmetries when the state is decomposed into a system and environment). An alternative approach is based on relative frequencies for infinitely repeated experiments \cite{Everett, Hartle}, but this is challenging to relate to finite experiments. More recently, an approach based on decision theory was proposed by Deutsch and developed by Wallace and others  \cite{deutsch99, Wallace03, Greaves07, Wallace10b, GreavesMyrvold}.  Several of the technical steps in our proofs are similar to those in this decision-theoretic approach. However this approach has led to criticisms \cite{Kent10, Albert}, and because of its focus on the future rather than the past, it seems that a decision-theoretic picture alone cannot  explain why we find ourselves in a world in which the historical record of quantum experiments agrees with the Born rule. Very recently, Saunders has proposed an approach \cite{Saunders21} based on dividing the state into equal-amplitude branches based on the decoherence structure and following similar counting arguments to those in statistical mechanics.

\section{A simple class of many-worlds theories} \label{sec:theories} 

Consider a Hilbert space with a countably infinite orthonormal basis of possible `worlds' $\ket{n}$ labelled by non-negative integers, $n \in \{0,1,2,\ldots\}$. The  space may be  over the real or complex numbers depending on the particular theory under consideration.  A many-worlds state is a vector $\ket{v}$ in this space. We will refer to $v_n = \braket{n}{v}$   as the \emph{amplitude} of world $n$. For simplicity, we restrict ourselves to \emph{linear many-worlds theories} for which the evolution over a finite interval is given by a linear operator $T$, such that  
\begin{equation} 
\ket{v'} = T \ket{v},
\end{equation} 
 We will denote the matrix elements of $T$ in the basis of worlds by $T_{ij} = \bra{i} T \ket{j}$. Our aim will be to derive  a natural probability distribution $p_n$ over worlds in the state $\ket{v}$. Similarly, we will denote the probability distribution over worlds in $\ket{v'}$ by $p_n'$. Different theories will be characterised by giving the set of allowed state vectors and transformations in the theory (with the requirement that the allowed transformations must map allowed states into allowed states). The cases we consider here are:

\begin{enumerate}

\item \textbf{Quantum many-worlds theory}. The allowed states are all complex vectors satisfying $\braket{v}{v}=1$, and the allowed transformations are all unitary operators (for which $T T^{\dagger} =  T^{\dagger} T = I$, where I is the identity transformation). We will show that in this case $p_n = |v_n|^2$. 

\item \textbf{Unnormalised quantum many-worlds theory}. An alternative version of the quantum many-worlds theory with unnormalised states. The allowed states are all complex vectors satisfying $0<\braket{v}{v}< \infty$, and the allowed transformations are all unitary operators. We will show that  in this case $p_n = \frac{ |v_n|^2}{\sum_k  |v_k|^2}$. Note that in this case the probability $p_n$ corresponding to a given world depends on all amplitudes, and not only on $v_n$.\\

\item \textbf{Stochastic many-worlds theory} This represents a many-worlds version of a classical probabilistic world. The allowed states are those with real amplitudes satisfying $v_n \geq 0 \; \forall n$ and $\sum_n v_n =1$, and the allowed transformations are those satisfying $T_{ij} \geq 0\; \forall i,j$ and $\sum_i T_{ij} =1 \; \forall j$. Note that although $v_n$ and $T_{ij}$ obey the same mathematical properties as probability distributions and stochastic maps respectively, we are not assuming that they represent probabilities. However, we will show that in this case $p_n = v_n$. 

\item \textbf{Discrete many-worlds theory} A many-worlds theory in which there are an integer number of copies of each world, with the dynamics transforming each world in the original state into a finite number of new worlds. The allowed states are  all real vectors for which $v_n$ is a non-negative integer for each $n$ and $ \sum_n v_n < \infty$, and the allowed dynamics are those for which $ T_{ij} $ is a non-negative integer for all $i,j$ and $\sum_i T_{ij} < \infty \; \forall j$. We will show that in this theory there is no natural probability distribution $p_n$ obeying our axioms.
\end{enumerate}

\section{Natural axioms for a probability measure}

For each of the theories considered above, we will investigate whether one can define a probability distribution over worlds $p_n$ obeying the `reasonable' axioms given below. 

\begin{enumerate} 
\item \emph{Present state dependence -} $p_n$ depends only on the present  state $\ket{v}$, and not on how that state was generated.  \label{axiom:state} 

\item  \emph{Weak connection with amplitudes -} $v_n=0$ implies that $p_n =0$. Hence only worlds with non-zero amplitudes are considered `real' components of the many-worlds state. \label{axiom:amplitude} 

\item \emph{Weak connection with transformations -} If the set of worlds can be partitioned in such a way that a transformation $T$ acts separately on each part, then $T$ will preserve the total probability of each part. This captures the intuition that probability cannot `flow' between worlds that are  uncoupled by the dynamics. Expressing this more formally,  if there exists a partition of the non-negative integers  into subsets  $\mathcal{S}_k$ such that $T_{ij} = 0$ whenever $i$ and $j$ are in different subsets, then $\sum_{n \in \mathcal{S}_k} p_n = \sum _{n \in \mathcal{S}_k} p'_n$ for all  $k$. \label{axiom:transformation} 

\end{enumerate} 
Of these axioms, \ref{axiom:state} and \ref{axiom:amplitude} both seem relatively uncontroversial, and are usually assumed without comment in  approaches to probability in many-worlds. Axiom \ref{axiom:transformation} is more subtle and powerful, but also seems a natural property in the context of transformations. We will discuss these axioms further at the end of the paper. However, we will first  show how they can be used to derive probability measures for the theories given in section \ref{sec:theories}.

\section{Deriving probability measures} 

For ease of understanding, in this section we describe informally  how axioms \ref{axiom:state} - \ref{axiom:transformation}  allow us to derive probability rules for the many-worlds theories we consider, highlighting the key steps in the proof. Detailed formal proofs of all of the results can be found in appendix \ref{app:derivations}. 

We first prove two helpful lemmas which we use in deriving the probability rules, that apply to \emph{bounded} states (those which contain only a finite number of worlds with non-zero amplitude) 

\begin{enumerate} 

\item[L1] \emph{Equal amplitudes give equal probabilities}. For bounded states in all of the theories we consider, if two worlds  have equal amplitude then they have the same probability. 

To prove this, we first consider a transformation which swaps two worlds, one of which has zero amplitude and the other has non-zero amplitude. Due to axiom \ref{axiom:transformation}, the total probability of the two  swapped worlds must stay the same, and due to axiom \ref{axiom:amplitude} the worlds with zero amplitude must have zero probability. Hence the probabilities are also swapped. 

The same argument doesn't apply  if we swap two worlds with non-zero amplitudes directly, as probabilities could move between the worlds. However, by performing a sequence of three  swaps, each of which involves one world with zero amplitude, we can swap any two worlds and their corresponding amplitudes and probabilities whilst leaving the remaining worlds  unchanged. If the amplitudes of the two swapped worlds were initially the same, then the final state will be the same as the initial one, but with the probabilities swapped. From axiom \ref{axiom:state} it then follows that the probabilities for these two worlds must be the same. 

\item[L2] \emph{Larger amplitudes cannot lead to smaller probabilities}. For bounded states in all the theories we consider except discrete many-worlds theory, if one world has a larger amplitude than another,  its probability must be at least as large.

To prove this, we perform a transformation which branches the higher amplitude world into two final worlds, one of which has the same amplitude as the  smaller amplitude world. From axiom \ref{axiom:transformation} the sum of the probabilities in the two branches must be equal to the probability of the initial higher amplitude world. However, one of the branches has the same  probability as the smaller amplitude world, hence the probability of the higher amplitude world must be at least as large.  \vspace{0.5cm} 

\end{enumerate} 

\vspace{1cm} 

Next, we use these results to derive probability rules for our many-worlds theories. We begin by considering a particularly simple class of quantum many-worlds states in which the amplitudes are square roots of rational numbers. 
\begin{equation} 
 \ket{v} = \sum_{n=0}^{N-1} \sqrt{\frac{m_n}{M}} \ket{n}
\end{equation} 
Following a  branching strategy similar to Deutsch and Wallace \cite{deutsch99, Wallace10b}, we perform a unitary transformation in which each world $\ket{n}$ evolves independently of the others into $m_n$ new worlds  with equal amplitude. 

Each world in the final state  has the same amplitude $\frac{1}{\sqrt{M}}$ and thus equal probability $\frac{1}{M}$ from L1 above. Applying axiom \ref{axiom:transformation} to the transformation tells us that the probability in each set of branches is conserved, hence we  find that the probability of world $\ket{n}$ in the initial state must be $\frac{m_n}{M}$, which is the standard quantum result.  

To extend this result to arbitrary initial states in quantum many-worlds theory, we must perform three additional steps. Firstly, if the initial state contains an infinite number of worlds with non-zero amplitudes, we begin by transforming it into a superposition of finitely many worlds in order to create some `working space' in which to apply the above results. For example, when considering the probability $p_k$ we can   perform a unitary transformation that merges all of the worlds with labels greater than $k$ into a single world, without affecting the other worlds (and hence not changing $p_k$ due to axiom \ref{axiom:transformation}).  For an arbitrary positive integer $M$, we can then write the state as 
\begin{equation} 
\ket{v'} = \sum_{n=0}^{N-1} \sqrt{\frac{ m_n + \epsilon_n}{M}} e^{i \phi_n} \ket{n},
\end{equation} 
where  $m_n$ is the largest integer less than or equal to $M |v_n|^2$, and $0 \leq \epsilon_n <1$. Next, we eliminate the phase factors $e^{i \phi_n}$ by applying a unitary which acts separately on each world, and therefore does not affect the probabilities. Finally, we perform a branching unitary, in which each initial world $\ket{n}$ is transformed into a superposition of $m_n$ new worlds with equal amplitude,  and one new world with smaller amplitude. The final state contains approximately $M$ worlds with equal amplitude $\frac{1}{\sqrt{M}}$ and at most $N$ worlds with smaller amplitude. When  $M \gg N$, the smaller amplitude worlds are  almost irrelevant (due to L2 above), and the probability associated with world $n$ in the initial state is  
\begin{equation} 
p_n \approx \frac{m_n}{M} \approx \frac{M|v_n|^2}{M} = |v_n|^2.
\end{equation} 
By considering arbitrarily large $M$, this argument can be made exact, giving the standard quantum probability rule $p_k = |v_k|^2$.

The derivation of the probability rules in unnormalised quantum theory and stochastic many-worlds theory are very similar. In the former case, the main difference is that the final state contains approximately $MX$ worlds with equal amplitude, where $X= \sum_m |v_m|^2$. This gives $p_n \approx  \frac{m_n}{MX} \approx \frac{ |v_n|^2}{X}$, and yields the expected quantum probability rule for this case, $p_n =  \frac{ |v_n|^2 }{ \sum_m |v_m|^2}$.  In the case of stochastic many-worlds theory, the steps are identical to those for quantum many-worlds theory, except without  phase factors, square roots, and absolute-values-squared. 

It is also important to note that the probability rules derived so far satisfy the axioms for all allowed states and transformations, and not only the ones considered when constructing the proof. For example, if a unitary transformation is block-diagonal in quantum theory, it commutes with the projectors onto each block, and hence conserves the total probability of each block.  

Finally, to see that no probability rule obeying our axioms is possible for Discrete many-worlds theory, consider a transformation on the state $\ket{0} + \ket{1}$ which takes $\ket{1} \rightarrow \ket{1} + \ket{2}$ whilst leaving all other worlds unchanged. Applying L1  to the initial state and using axiom \ref{axiom:transformation} gives $p_0'=\frac{1}{2}$, while applying L1  to the final state $\ket{0} + \ket{1} + \ket{2}$gives $p_0'=\frac{1}{3}$. As this leads to a contradiction, no probability rule obeying the axioms exists\footnote{A similar argument can be used to rule out a `naive branch counting' strategy in quantum many worlds theory, in which each world with non-zero amplitude is assigned equal probability \cite{wallacebook}}.

\section{Discussion}

In this section we present some additional discussion about the axioms, theories beyond quantum theory, and decoherence in our approach. 

\subsection{Axioms} 

\begin{enumerate} 
\item \emph{Present state dependence -} This is a simplifying axiom, and incorporates the fact that the state at a given time should be sufficient to make any substantive claims about it, including the typical properties of worlds within it. Arguably if historical information is important, it should form part of the state, and our framework should be extended. Furthermore, this axiom allows probabilities to be assigned to worlds in an arbitrary initial state, without needing to know how that state was generated. 

\item  \emph{Weak connection with amplitudes -} This  seems the most compelling requirement. Without this, one could simply assert that the state is irrelevant, and $p_0 = 1$ in all cases. 

\item \emph{Weak connection with transformations -} This is the most complicated  of the three assumptions, but it is hard to see a weaker way of incorporating a dependence on the dynamics of the theory. Without this, we could assign an arbitrary probability distribution over the worlds appearing in every state (for example we could always assign probability 1 to the world with non-zero amplitude having the lowest numerical label). Within quantum theory, this also fits nicely with the continuous time picture in which $T=e^{-i H t}$ for some Hamiltonian $H$, as $T$ will act separately on each partition if $H$ does. Note that in terms of the proofs, we only need this axiom to apply to a specific set of unitaries involving branching, swapping, or merging of worlds.
\end{enumerate}

An alternative to axiom 3, which is a stronger assumption but offers a nice conceptual picture is:

\begin{itemize} 
\item[3'.] \emph{Weak connection with transformations-} For every state $\ket{v}$ and transformation $T$ in the theory, there exists a conditional probability distribution $P_{i|j}$ such that 
\begin{equation} 
p_n' = \sum_m P_{n|m} \, p_m
\end{equation}   
satisfying $P_{i|j}=0$ whenever $T_{ij}=0$. This ensures that probability can only `flow' between states which are linked by the dynamics. 
\end{itemize}
The existence of a conditional probability distribution $P_{i|j}$ for each transformation of the state  supports the idea that living within an evolving many-worlds state could feel like undergoing a stochastic evolution. There is also a nice symmetry between the claim that $T_{ij}=0 \implies P_{i|j}=0$ and $v_n =0 \implies p_n=0$.

Axiom 3' is strictly stronger than the original axiom 3 as it implies it but is not implied by it. In particular when the worlds can be partitioned into subsets on which $T$ acts separately, then the conditional probability distribution $P_{i|j}$ can only redistribute probability within these subsets, and hence the total probability of each subset is preserved. However, one can imagine a trivial theory with only  one allowed state $\ket{v} = \ket{1} + \ket{2}$, and one allowed transformation $T= \op{1}{2}+\op{2}{1}$. Under  axiom 3 we could assign  an arbitrary probability $p_1$ for this state, but with axiom 3' we must have $p_1=p_2=\frac{1}{2}$. 

Because axiom 3' is stronger than axiom 3, we can derive all of the same results as before (and Lemma 1 now follows directly by permuting the two worlds in question). However, a more subtle point is whether the probability rules derived previously actually satisfy axiom 3' in all cases. For example, for all states $\ket{v}$ and unitary transformations $T$ in quantum many-worlds theory with $p_n = |v_n|^2$, can we always find a conditional probability distribution $P_{i|j}$ satisfying the requirements of axiom 3'? This is by no means trivial, but it is a consequence of results in \cite{Dieks, Aaronson, Mister} (in particular the Flow  or Schr\"odinger Theory presented in \cite{Aaronson}) that this is indeed the case. In general, for a given $\ket{v}$ and $T$ within quantum theory, there may be many different conditional probability distributions  $P_{i|j}$ satisfying axiom 3'. If conceptual importance is to be given to this quantity, it may therefore be useful to consider adding additional axioms which specify it uniquely, or properties which are independent of this choice. Similar results will apply for unnormalised quantum theory, and in the case of  stochastic many-worlds theory we can simply take $P_{i|j} = T_{ij}$. 

Overall, it is interesting that one can overlay a stochastic evolution onto a many-worlds state with natural properties, and that doing so is helpful in deriving the Born rule. 

\subsection{Non-quantum many-worlds theories}

Recovering a natural probability rule in the case of stochastic many-worlds theory shows that this approach is not specific to quantum theory. Furthermore, this theory may also be interesting in its own right. In particular, it is difficult to make sense of theories involving objective probabilities at a fundamental level \footnote{Frequentism only gives a definite prediction for infinitely many trials, which is irrelevant for practical situations,  and the principal principle \cite{Lewis} seems somewhat unsatisfying. Kent's incompressible bit string provides another alternative approach \cite{Kent10}}. This result suggests an interesting alternative possibility of treating objective probabilities as amplitudes in a many-worlds state. The fact that we find ourselves in a world which is typical with respect to the objective probabilities, and that these are helpful in subjective decision making, would then be a consequence of the natural probability distribution over worlds matching the amplitudes. 

For discrete many world theory there is no natural probability distribution over worlds obeying our axioms. Two alternatives for understanding probabilities within this theory are to violate axiom 1 or axiom 3. In the former case, we could assume some initial probability distribution, and then update this probability distribution during transformations according to 
\begin{equation} \label{eq:discrete_prob}
p_n' = \sum_m \left(\frac{T_{nm}}{\sum_{n'} T_{n'm}}\right) p_m.
\end{equation} 
This would lead to the same natural probability distribution as a Stochastic many-worlds theory for which the transition matrix elements $\tilde{T}_{nm}$ are given by the bracketed expression in \eqref{eq:discrete_prob}. This would lead to a natural flow of probability between worlds (obeying axiom 3), but the same state could yield many different probability distributions according to how it was generated.  In the latter case, we could instead drop axiom 3, and take 
\begin{equation} 
p_n = \frac{v_n}{\sum_n' v_n} 
\end{equation} 
This is a function of only the current state (obeying axiom 1), but the probability of a world can change even when the transformation acts on it like the identity. 

\subsection{Decoherence} 

Note that this derivation of the Born rule does not rely on decoherence, and indeed employs transformations  such as  permutations of worlds which would be essentially impossible to achieve in practice. However, such transformations are possible in principle, and it is helpful to  use the full strength of the theory to generate constraints upon possible probability rules. The two particular instances where this is helpful are in proving Lemma 1, and when  compressing states with support over all worlds in order to generate `working space'. An alternative for the former is to take Lemma 1 directly as an additional axiom,  perhaps motivated by symmetry, but the internal structure of the two worlds may look very different, and it seems quite strong to assume this is irrelevant in determining $p_n$. The latter could possibly be eliminated by adding a continuity axiom, but then  one would have to choose a particular distance measure on states. Decoherence also plays a key role in explaining `collapse' in realistic situations, as it becomes practically impossible to re-interfere macroscopically distinct states.

\section{Conclusions} 

If reality has a deterministic many-worlds structure, as in the Everett interpretation, then there are no objective probabilities in the theory. In this case, how should we understand the fact that  we are living in a world in which the relative frequencies of outcomes in past quantum experiments are very close to the probabilities predicted by the Born rule? Although it is consistent to say that this is a mere coincidence  (as a world with such results must exist), it would be good to give a deeper explanation of this fact. 

The approach pursued here is to establish a natural way of picking a world at random from the many-worlds state, and then observing that  with very high probability such randomly chosen worlds have the property that they agree with the Born rule. i.e. that this is a typical property of the worlds. The key is to motivate such a natural way of picking worlds at random. One apparently natural way (at least for states containing a finite number of worlds) is to simply assign equal weight to each world with non-zero amplitude, but this ignores some of the mathematical structure in the state, and it violates one of our reasonable axioms   (axiom 3). Other problems with this strategy have been discussed in \cite{Wallace10b}. 

It seems impossible to give  a completely compelling set of requirements which a  probability distribution over worlds must obey, but  we have defined three natural axioms which are sufficient to recover the Born rule in the context of quantum theory, and also give an appealing result for classical stochastic theory. These axioms are that  probabilities: only depend on the current state, are  zero for worlds which appear with zero amplitude, and cannot flow between sets of worlds which are uncoupled by the dynamics.  It would be interesting to explore  alternative possible requirements, and would also be good to reconsider issues relating to the choice of `world' basis in this context.  

Are results such as these sufficient to explain the empirical success of the Born rule? An interesting perspective is to consider a universe in which the Everett interpretation is true, described by a single unitarily evolving quantum state. Under reasonable conditions,  such a state could be described by a superposition of branching worlds, many of which contain structures which look like people. What would it be like to live as one of those people in such a universe? If it is like our own experiences then this supports the many-worlds interpretation. If it is very different then this would rule it out. The strangest situation would be if we cannot in principle say what it is like, given that we know the correct mathematical theory describing the universe.

\vspace{2cm}
\noindent
\textit{Acknowledgments.}
This work was supported by an FQXi ``physics of what happens" grant. The author is grateful to  A.Kent, J.Barrett, and Ll. Massanes for helpful discussions.

\bibliography{manyworldsbib}

\appendix

\section{Detailed proof of probabilities} \label{app:derivations} 

We begin by proving two useful  lemmas which apply to probability distributions obeying our axioms for the many-worlds theories we consider. These lemmas  focus  on \emph{bounded} states,  which have a finite number of non-zero amplitudes (such that $v_n =0$ for $n \geq N$), but the final theorems will apply to any state.   

We first show that for any theory in which all permutations are allowed transformations (which includes all of the theories in this paper), any two worlds with the same amplitude have the same probability. 

\begin{lemma}[equal amplitudes] \label{lem:equal} Consider a theory in which all permutations of worlds  are allowed transformations ($T$ is a permutation of worlds if $T_{ij} =\delta_{i, \pi[j]}$ where $\pi$ is a bijection of the non-negative integers), and a  state $\ket{v} = \sum_{n=0}^{N-1} v_n  \ket{n}$ for which $v_n = v_m$. Then  $p_n= p_m$.
 \end{lemma}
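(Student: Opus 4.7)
The plan is to combine axioms 1 and 3 to show that probability must transfer cleanly under a suitably chosen permutation, using an auxiliary zero-amplitude world to route the exchange. A direct single swap of the two equal-amplitude worlds $n$ and $m$ is not strong enough, because axiom 3 applied to the block $\{n,m\}$ only gives $p_n + p_m = p'_n + p'_m$, which does not separate the individual probabilities. The bounded-state hypothesis is crucial: it guarantees the existence of some label $k \geq N$ with $v_k = 0$ that can serve as scratch space. If $v_n = v_m = 0$ the claim is immediate from axiom 2, so I will assume $v_n = v_m \neq 0$.

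First I would prove a ``one-sided transfer'' sub-claim: for the permutation $T$ swapping worlds $i$ and $j$ with $v_j = 0$, the probabilities cleanly transfer, $p'_j = p_i$ and $p'_i = 0$, while $p'_\ell = p_\ell$ for every other $\ell$. This follows from applying axiom 3 to the partition consisting of the pair $\{i,j\}$ together with every other label as a singleton (the swap has vanishing matrix elements between distinct blocks), combined with axiom 2 applied to $v_j = 0$ initially and to $v'_i = 0$ after the swap. I would then apply this sub-claim three times in succession, using $k$ as a relay: swap $n \leftrightarrow k$, then $m \leftrightarrow n$, then $k \leftrightarrow m$. At each stage, one of the two swapped worlds has zero amplitude, so the sub-claim applies. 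Tracking amplitudes shows (using $v_n = v_m$ at the critical step) that the composite permutation returns the amplitude vector to itself, so the final state is identical to the initial state. Tracking probabilities shows that $p_n$ and $p_m$ have been exchanged. Axiom 1 then forces the original $p_n$ to equal the final probability at position $n$, namely $p_m$.

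The main obstacle I anticipate is the bookkeeping across the three successive swaps, together with verifying that each intermediate amplitude vector is an allowed state of the theory so that axiom 3 is applicable at each stage. This is routine in all cases listed in section \ref{sec:theories} admitting every permutation, since permutations preserve the norm condition of (unnormalised) quantum theory, the simplex condition of stochastic theory, and the non-negative-integer condition of discrete theory. Conceptually, the argument is essentially forced: without an auxiliary zero-amplitude world one cannot convert the ``total probability preserved'' content of axiom 3 into a ``probability transferred'' statement, and without axiom 1 one cannot close the loop from the final state back to the initial one.
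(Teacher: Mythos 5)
Your proposal is correct and follows essentially the same route as the paper's proof: a composition of three pairwise swaps, each involving a zero-amplitude auxiliary world beyond $N$, with axioms 2 and 3 giving clean probability transfer at each step and axiom 1 closing the loop since the composite permutation leaves the state invariant when $v_n=v_m$. The only additions beyond the paper's argument are the explicit handling of the trivial $v_n=v_m=0$ case and the check that intermediate states remain allowed, both of which are fine.
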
 

\begin{proof} Denote by $T^{n \leftrightarrow m}$ the transformation which swaps worlds $n$ and $m$ and leaves all other worlds unchanged ($T^{n \leftrightarrow m} = \ket{n}\bra{m} + \ket{m}\bra{n} + \sum_{k\notin  \{n,m\}} \ket{k}\bra{k}$). Hence in this case $T^{n \leftrightarrow m} \ket{v} = \ket{v}$. Directly applying axiom \ref{axiom:transformation} to this transformation is not sufficient, as this doesn't tell us how probabilities change inside the $n,m$ subspace. Instead, we note that  $v_N =0$, and that $T^{n \leftrightarrow m} = T^{n \leftrightarrow N}T^{n \leftrightarrow m} T^{N \leftrightarrow m}$. For the first transformation $T^{n \leftrightarrow N}$, consider a partition of worlds in which $n$ and $N$ are in one part, and each other world is in its own part. Due to axiom  \ref{axiom:transformation}, we know that $p_n + p_N = p'_n+p'_N$, and that $p_k= p'_k$ for all $k\notin  \{n,m\}$. Furthermore, from axiom \ref{axiom:amplitude} and the fact that $v_N=v'_n = 0$, we know that $p_N=p'_n=0$. Hence $p'_N = p_n$ and $T^{n \leftrightarrow N}$  permutes the corresponding probabilities. Following a similar logic for the two subsequent transformations $T^{n \leftrightarrow m}$ and $T^{N \leftrightarrow m}$ (each of which permutes two amplitudes, one of which is zero), we find that the probabilities $p'_n$ after the sequence $T^{n \leftrightarrow N}T^{n \leftrightarrow m} T^{N \leftrightarrow m}$are  the permutation of the original probabilities by  $T^{n \leftrightarrow m}$, hence $p_n' = p_m$. However, due to axiom \ref{axiom:state} and the fact that the initial state is the same as the final state in this case, $p_n' = p_n$. Hence $p_n = p_m$.
\end{proof}

Our second lemma shows that, in all except the discrete many-worlds theory, if one world has larger amplitudes than another it cannot have a smaller probability. This is useful in deriving the probability rule for cases where the amplitudes are not related to rational numbers in a convenient way. 

 \begin{lemma}[larger amplitudes cannot lead to smaller probabilities] Consider a state  $\ket{v} = \sum_{n=0}^{N-1}  v_n  \ket{n}$ in quantum many-worlds theory, unnormalised quantum many-worlds theory, or stochastic many-worlds theory for which $|v_l| > |v_k|$.  Then $p_l \geq p_k$. \label{lem:smaller} \end{lemma}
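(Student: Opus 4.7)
The plan is to execute the branching argument outlined in the main text: split the larger-amplitude world $l$ into two components, one of which carries exactly the same amplitude as world $k$, then use Lemma \ref{lem:equal} to equate that component's probability to $p_k$ and axiom \ref{axiom:transformation} to bound it by $p_l$.

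Concretely, I would first pick an index $L$ outside the support of $\ket{v}$ (for example $L=N$) so that $v_L=0$. I then construct a transformation $T$ that acts as the identity on every world except $l$, and on $l$ branches into worlds $l$ and $L$ with new amplitudes $\alpha v_l$ and $v_k$ respectively. In the quantum and unnormalised quantum cases, $\alpha$ is any number satisfying $|\alpha|^2 + |v_k/v_l|^2 = 1$; since $|v_k|<|v_l|$ this is achievable, and the $2\times 2$ block of $T$ on $\mathrm{span}\{\ket{l},\ket{L}\}$ can be completed to a unitary, with $T$ acting as the identity on every other world. In the stochastic case I take $T_{ll}=1-v_k/v_l$, $T_{Ll}=v_k/v_l$, $T_{LL}=1$, and identity elsewhere, which plainly satisfies the stochastic constraints. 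In the complex case I also fix the phase of the new $L$-amplitude so that $v'_L=v_k$ exactly, not merely in modulus.

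I then apply axiom \ref{axiom:transformation} to the partition whose only non-singleton block is $\{l,L\}$. Since $T$ has no matrix elements connecting this block to any other, the total probability on $\{l,L\}$ is preserved; together with $p_L=0$ (axiom \ref{axiom:amplitude}) this gives $p_l = p'_l + p'_L$. The singleton block $\{k\}$ similarly yields $p_k = p'_k$. In the post-transformation state, worlds $k$ and $L$ share the amplitude $v_k$, so Lemma \ref{lem:equal} applied to $\ket{v'}$ gives $p'_L = p'_k = p_k$. Combining, $p_l = p'_l + p_k \geq p_k$ because probabilities are non-negative.

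The main obstacle is the case-by-case verification that the branching transformation is a legal map in each of the three theories, in particular completing the $2\times 2$ sub-matrix to a unitary in the quantum cases and checking the stochastic column sums. Once one uses $|v_l|^2 - |v_k|^2 > 0$ (respectively $v_l - v_k > 0$) to guarantee a nontrivial residual branch on $\ket{l}$, this is routine. A minor subtlety worth flagging is that Lemma \ref{lem:equal} as stated requires equal amplitudes rather than equal moduli, which is why the phase freedom in $v'_L$ must be used to match the phase of $v_k$ in the complex cases; the real cases are immediate.
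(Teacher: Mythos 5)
Your proposal is correct and follows essentially the same route as the paper's proof: branch the larger-amplitude world into two via a transformation acting only on the block $\{l,N\}$, match one branch's amplitude (including phase) to $v_k$, then combine Lemma \ref{lem:equal} with axioms \ref{axiom:amplitude} and \ref{axiom:transformation} to get $p_l = p'_l + p_k \geq p_k$. The only difference is the immaterial choice of which label carries the matched amplitude (you put $v_k$ on the fresh world $L=N$, the paper keeps it on $l$ and sends the residual to $N$).
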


\begin{proof}
For quantum many-worlds theory, or unnormalised quantum many-worlds theory, consider the unitary evolution which acts in the $\{\ket{l},\ket{N}\}$ subspace as  
\begin{eqnarray} 
T\ket{l } &=& \left( \frac{v_k}{v_l}\right) \ket{l} + \sqrt{1 - \left| \frac{v_k}{v_l} \right|^2 } \ket{N} \nonumber \\
 T\ket{N} &=& \sqrt{1 - \left| \frac{v_k}{v_l} \right|^2}  \ket{l} - \left( \frac{v_k}{v_l}\right)  \ket{N}, 
\end{eqnarray} 
and satisfies $T\ket{m} = \ket{m}$ if $m \notin \{l,N\}$. For stochastic many-worlds theory consider a stochastic evolution which acts on $\ket{l}$ as  
\begin{equation} 
T\ket{l } = \left( \frac{v_k}{v_l}\right) \ket{l} + \left( 1 - \frac{v_k}{v_l} \right) \ket{N} 
\end{equation} 
and satisfies $T\ket{m} = \ket{m}$ if $m \neq l$. In both cases this means that for  $\ket{v'} = T\ket{v}$ , $v_l'=v_k'$ and hence $p_l'=p_k'$ from Lemma  \ref{lem:equal}. From axiom \ref{axiom:transformation} we have that $p_N'+p_l' = p_N + p_l $ and  $p_k'=p_k$, and from axiom \ref{axiom:amplitude} we have $p_N=0$. Hence 
\begin{equation} 
 p_l = p_l' + p_N' = p_k' + p_N' = p_k + p_N'  \geq p_k.
\end{equation}  \end{proof} 

Using the above lemmas and the probability rule axioms, we now derive the appropriate probability rules for our many-worlds theories (or in the case of discrete many-worlds theory show that this is impossible). To illustrate the key idea, we first derive the probability rule for quantum many-worlds theory in the case in which all amplitudes are square roots of rational numbers,  using a similar branching strategy to Deutsch and Wallace \cite{deutsch99, Wallace10b} (although without any decision theoretic component).

\begin{theorem}[probabilities when $v_n$ are square-roots of rational numbers] Consider a state of the form $\ket{v} = \sum_{n=0}^{N-1} \sqrt{\frac{m_n}{M}} \ket{n}$ in quantum many-worlds theory, where $m_n$ and $M$ are positive integers. Then $p_n = \frac{m_n}{M} $. \label{thm:rational} \end{theorem}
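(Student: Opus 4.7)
The plan is to reduce to the equal-amplitude case by constructing a branching unitary that turns $\ket v$ into a uniform superposition over $M$ basis worlds, then combine Lemma~\ref{lem:equal} with axiom~\ref{axiom:transformation}. First I would reserve disjoint sets $B_0, \ldots, B_{N-1}$ of currently-unused basis labels, drawn from $\{N, N+1, \ldots\}$ with $|B_n| = m_n - 1$, and define a unitary $U$ block-diagonally: on each finite-dimensional subspace $V_n = \mathrm{span}(\{\ket n\} \cup \{\ket w : w \in B_n\})$, let $U$ be any unitary extending $\ket n \mapsto \frac{1}{\sqrt{m_n}} \sum_{w \in \{n\} \cup B_n} \ket w$, and let $U$ act as the identity on every remaining basis vector. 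Since the initial amplitudes on $B_n$ vanish, applying $U$ yields the uniform state $\ket{v'} = \sum_{w \in W} \frac{1}{\sqrt M}\ket w$ over the $M$-element label set $W = \bigcup_n (\{n\} \cup B_n)$.

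Next I would extract probabilities in the final state. Because $\ket{v'}$ is bounded and all $M$ of its non-zero amplitudes coincide, Lemma~\ref{lem:equal} forces $p'_w = p'_{w'}$ for every pair $w, w' \in W$; axiom~\ref{axiom:amplitude} gives $p'_k = 0$ for $k \notin W$; and normalisation $\sum_k p'_k = 1$ then fixes $p'_w = 1/M$ for every $w \in W$. To transfer this back to $\ket v$, I would apply axiom~\ref{axiom:transformation} to the partition whose non-trivial blocks are the sets $\{n\} \cup B_n$ and whose remaining blocks are singletons for labels on which $U$ acts trivially. By construction $U_{ij} = 0$ across distinct blocks, so the total probability of each block is preserved. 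For the block $\{n\} \cup B_n$, axiom~\ref{axiom:amplitude} zeroes out the initial contributions from $B_n$, yielding $p_n = \sum_{w \in \{n\} \cup B_n} p'_w = m_n/M$.

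The main obstacle is careful bookkeeping rather than any hard estimate: verifying that $U$ is a bona fide unitary on the full infinite-dimensional Hilbert space (immediate, as it is the direct sum of unitaries on the finite-dimensional blocks $V_n$ and the identity on the orthogonal complement), and that its block-diagonal structure matches the partition used in axiom~\ref{axiom:transformation} (automatic from the construction). Note that Lemma~\ref{lem:smaller} plays no role here because the amplitudes are exactly the square roots of rationals with common denominator $M$; it will only enter when the theorem is extended to arbitrary amplitudes, where a limiting argument in $M$ is needed to squeeze $p_n$ between nearby rationals.
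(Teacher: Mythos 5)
Your proposal is correct and follows essentially the same route as the paper: a block-diagonal branching unitary turning $\ket{v}$ into a uniform superposition of $M$ equal-amplitude worlds, then Lemma~\ref{lem:equal} plus normalisation to get $p'_w = 1/M$, and axiom~\ref{axiom:transformation} (with axiom~\ref{axiom:amplitude} on the zero-amplitude labels) to pull back $p_n = m_n/M$. The only cosmetic difference is that you keep $\ket{n}$ as one of its own branches while the paper branches into entirely fresh labels; nothing hinges on this, and your observation that Lemma~\ref{lem:smaller} is not needed here matches the paper.
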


\begin{proof} 
To prove this result, we branch each world in the original state into $m_n$ new worlds in the final state.  Consider a partition of the non-negative integers into the sets 
\begin{equation} 
\mathcal{S}_k = \left\{ \begin{array}{cl} \{k \} \cup \mathcal{L}_k  & \textrm{for $0 \leq k < N$} \\ \{ k \}  & \textrm{for $k \geq N+MN$} \end{array} \right. .
\end{equation}  where
\begin{equation} 
\mathcal{L}_k = \{ N + Mk, N + Mk+1, \ldots ,N + Mk+(M-1) \} 
\end{equation} 
Now consider a unitary $T$ whose matrix elements $T_{ij}$  are nonzero only when $i$ and $j$ are in the same subset $\mathcal{S}_k$, and which has the property that  
\begin{equation} \label{eq:branching} 
T\ket{n} = \frac{1}{\sqrt{m_n}} \sum_{l=0}^{m_n-1}  \ket{N+ nM+ l}  \qquad \textrm{for $0\leq n <N$}.
\end{equation} 
Then 
\begin{equation} 
T\ket{v} = \frac{1}{\sqrt{M}} \sum_{n=0}^{N-1} \sum_{l=0}^{m_n-1} \ket{N+ nM+l} . 
\end{equation}
As this is a superposition of worlds with equal amplitude it follows from Lemma \ref{lem:equal} that the probability assigned to each world in this state must be identical. Hence $p_m'=\frac{1}{M}$ for each world in this state. From axiom \ref{axiom:transformation} and equation \eqref{eq:branching} it then follows that $p_n = \frac{m_n}{N}$.  
\end{proof} 

Next, we use Lemma \ref{lem:smaller} to extend this result to the general case of arbitrary states. 

\begin{theorem}[general quantum probability rule] For an arbitrary state $\ket{v} = \sum_{n} v_n  \ket{n}$ in quantum many-worlds theory, $p_n = |v_n|^2 $. \label{thm:quantum} \end{theorem}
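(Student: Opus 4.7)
The plan is to reduce the arbitrary case to the rational case of Theorem \ref{thm:rational} using Lemma \ref{lem:smaller}, via three preparatory moves already sketched in the main text: (i) compress the infinite tail into a single world, (ii) remove phases, (iii) apply a branching unitary that approximates equal-amplitude worlds arbitrarily well.

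First, I would fix an index $k$ and show $p_k = |v_k|^2$. If $\ket{v}$ has infinitely many non-zero amplitudes, I would pick some $N > k$ and apply a unitary $U_1$ that acts as the identity on the span of $\{\ket{0},\ldots,\ket{N-1}\}$ and maps the orthogonal complement (restricted to the support of $\ket{v}$) into a single world $\ket{N}$. The partition $\{\{0\},\ldots,\{N-1\},\{N,N+1,\ldots\}\}$ is invariant under $U_1$, so axiom \ref{axiom:transformation} gives $p_k = p_k^{(1)}$, where $p_k^{(1)}$ is the probability of world $k$ in $U_1\ket{v}$. This new state has at most $N+1$ non-zero amplitudes. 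Next I would apply a diagonal unitary $U_2 = \sum_n e^{-i\phi_n}\proj{n}$ to eliminate phases. Since $U_2$ respects the partition into singletons, axiom \ref{axiom:transformation} again gives $p_k^{(1)} = p_k^{(2)}$, with all amplitudes now non-negative reals.

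Now fix a large integer $M$ and define $m_n = \lfloor M|v_n|^2\rfloor$ for each $n$ with non-zero amplitude. The key step is to construct a branching unitary $U_3$ analogous to the one in Theorem \ref{thm:rational}, but that also leaves a small "residual" world to absorb the fractional parts $M|v_n|^2 - m_n$. Concretely, for each $n$ with $m_n \geq 1$, I would let $U_3$ map $\ket{n}$ to a superposition of $m_n$ new worlds each with amplitude $\frac{1}{\sqrt{M}}$ together with one residual world of amplitude less than $\frac{1}{\sqrt{M}}$, placing these daughter worlds in disjoint blocks of fresh indices (and absorbing worlds with $m_n = 0$ into the residual). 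Applying Lemma \ref{lem:equal} to the equal-amplitude worlds in $U_3U_2U_1\ket{v}$ shows each has probability $\frac{1}{M}$, while Lemma \ref{lem:smaller} shows that each residual world (smaller amplitude) has probability at most $\frac{1}{M}$. Applying axiom \ref{axiom:transformation} to the partition that groups each initial world together with its descendants, I would conclude
\begin{equation}
\frac{m_k}{M} \leq p_k \leq \frac{m_k + 1}{M}.
\end{equation}
Since $\frac{m_k}{M} \to |v_k|^2$ as $M\to\infty$, and $p_k$ does not depend on $M$ (the compression $U_1$ did not depend on $M$ and axiom \ref{axiom:state} fixes $p_k$ from $\ket{v}$ alone), taking the limit gives $p_k = |v_k|^2$.

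The main obstacle I anticipate is the bookkeeping in step (iii): I must choose disjoint families of daughter indices so that the block-diagonal partition really is respected by $U_3$, and I must ensure the residual world(s) have strictly smaller amplitude than the equal-amplitude branches so that Lemma \ref{lem:smaller} applies in the right direction. Handling the edge case $m_k = 0$ requires a separate argument: then $|v_k|^2 < \frac{1}{M}$ and one needs the upper bound $p_k \leq \frac{1}{M}$ directly, which follows from embedding $\ket{k}$'s amplitude into a branch with equal amplitude $\frac{1}{\sqrt{M}}$ via a unitary that borrows from the tail world $\ket{N}$. Once these details are arranged, the squeeze as $M\to\infty$ is immediate.
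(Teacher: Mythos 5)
Your overall route is the same as the paper's (compress the tail, strip phases, branch into $m_n$ equal-amplitude worlds plus a residual, let $M\to\infty$), but there is a genuine gap at the central quantitative step: you claim that Lemma \ref{lem:equal} shows each amplitude-$\frac{1}{\sqrt{M}}$ world in $U_3U_2U_1\ket{v}$ has probability exactly $\frac{1}{M}$. Lemma \ref{lem:equal} only tells you that these worlds share a \emph{common} probability $\delta$; it does not determine its value. In the rational case of Theorem \ref{thm:rational} the value $\frac{1}{M}$ follows because the final state consists of exactly $M$ worlds all of equal amplitude, so normalisation fixes $\delta=\frac{1}{M}$. Here, by contrast, there are only $\sum_n m_n$ (somewhere between $M-N$ and $M$) equal-amplitude worlds together with up to $N$ residual worlds of nonzero amplitude whose probabilities are unknown (axiom \ref{axiom:amplitude} does not force them to be small, only Lemma \ref{lem:smaller} caps them at $\delta$), so at this stage $\delta$ could be anywhere in a range around $\frac{1}{M}$. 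The same problem infects your bound ``each residual world has probability at most $\frac{1}{M}$'' and hence both sides of your squeeze $\frac{m_k}{M}\leq p_k\leq \frac{m_k+1}{M}$.

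The missing ingredient is an estimate on $\delta$ obtained from the \emph{total} probability together with Lemma \ref{lem:smaller}: since the residuals each have probability at most $\delta$, $1\leq(\sum_n m_n+N)\,\delta\leq (M+N)\delta$, and since the equal-amplitude worlds alone carry at most the total probability, $\delta\leq\frac{1}{\sum_n m_n}\leq\frac{1}{M-N}$. With $\frac{1}{M+N}\leq\delta\leq\frac{1}{M-N}$ your squeeze does close (and your $m_k=0$ worry disappears, since for fixed $v_k\neq 0$ one has $m_k\geq 1$ once $M$ is large, while $v_k=0$ is handled by axiom \ref{axiom:amplitude}); this gives a two-sided argument slightly different from the paper, which uses only the lower bound $\delta\geq\frac{1}{M+N}$ to conclude $p_k\geq|v_k|^2$ for every $k$ and then forces equality from $\sum_n p_n=1=\sum_n|v_n|^2$. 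As written, however, the step asserting $\delta=\frac{1}{M}$ is unjustified and needs this repair.
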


\begin{proof} 
We first transform the initial state (which may have infinitely many non-zero components) into a bounded state to create some working space. Consider a particular probability $p_k$. In order to determine this, we first  perform a unitary $T$ with the property that 
\begin{align} 
T \ket{n} &= \ket{ n}  & \textrm{if $n  \leq k$}  \nonumber \\
T\left( \frac{\sum_{n=k+1}^{\infty} v_n \ket{n}}{\sqrt{\sum_{m=k+1}^{\infty}  |v_m|^2}}\right) &= \ket{k+1}&, 
\end{align}   
which acts separately on each of the worlds 0 to $k$, and collectively on the rest (merging them into a single world). This gives a bounded state containing $N=k+2$ worlds with the property that $p_k'=p_k$ from axiom \ref{axiom:transformation}. Next, we pick a large non-negative integer $M$, and write the transformed state $\ket{v'}$ as 
\begin{equation} 
\ket{v'} = \sum_{n=0}^{N-1} \sqrt{\frac{ m_n + \epsilon_n}{M}} e^{i \phi_n} \ket{n} 
\end{equation} 
in which $m_n= \floor{M |v_n|^2 }$ and  $\epsilon_n = M |v_n|^2 - \floor{M |v_n|^2 }$,  where $\floor{x}$ is the floor function (the greatest integer less than or equal to $x$).  Hence each $m_n$ is an integer and $0\leq   \epsilon_n  <1$. 

We then apply the unitary transformation $T= \sum_n e^{- i \phi_n} \proj{n}$ to remove the phase factors from the state, yielding 
\begin{equation} \label{eq:unnormalisedstep} 
\ket{v''} = \sum_{n=0}^{N-1} \sqrt{\frac{ m_n + \epsilon_n}{M}} \ket{n}. 
\end{equation} 
 As this unitary acts separately on each world, it follows from axiom \ref{axiom:transformation} that $p_k''=p_k$.  Similarly to the previous case, we then perform a transformation which branches each world in the original state into $m_n$ or $m_n +1$ worlds in the final state (depending on whether $\epsilon_n = 0$) via a unitary acting on the same partition of worlds $\mathcal{S}_k$ as in  theorem \ref{thm:rational} with the property that 
\begin{equation} \label{eq:branching2} 
T\ket{n} = \frac{1}{\sqrt{m_n + \epsilon_n}} \left( \begin{array}{c} \sum_{l=0}^{m_n-1}  \ket{N+ nM+ l} \\ \quad+ \sqrt{\epsilon_n} \ket{N+nM+m_n} \end{array}\right)
\end{equation} 
for $0\leq n <N$. The final state is then 
\begin{equation} \label{eq:final_state} 
T\ket{v'''} = \frac{1}{\sqrt{M}} \sum_{n=0}^{N-1} \left( \begin{array}{c}  \sum_{l=0}^{m_n-1} \ket{N+ nM+l}   \\ \quad+ \sqrt{\epsilon_n} \ket{N+nM+m_n} \end{array} \right).
\end{equation}
This is a superposition of  at most $M$ worlds of equal amplitude $\frac{1}{\sqrt{M}}$ and at most $N$ worlds with smaller amplitude. Given lemma \ref{lem:equal} and  lemma \ref{lem:smaller}, if we denote the probability of one of the  amplitude $\frac{1}{\sqrt{M}}$ worlds by $\delta$ and consider the total probability it follows that $\delta \geq \frac{1}{M+N}$. From axiom \ref{axiom:transformation} it therefore follows that  
 \begin{align} 
 p_k &\geq m_k \delta \nonumber \\
 & \geq \frac{m_k}{M+N} \nonumber \\
 & \geq \frac{M|v_k|^2 - 1}{M+N} \nonumber \\
 & \geq  |v_k|^2 - \frac{N|v_k|^2+1}{M+N}. 
\end{align} 
As this holds for arbitrarily large $M$ (for fixed $N$ and $v_k$) it must be the case that $p_k \geq |v_k|^2 $. A similar approach could be employed to upper bound $ p_k$, but it is simpler to consider the  total normalisation of the state  $\sum_n |v_n|^2 = 1 = \sum_n p_n$. As $p_k \geq |v_k|^2 $ for each $k$, the only  possible solution is $p_k = |v_k|^2$ for all $k$. \end{proof}

This proof can be extended very straightforwardly to the case of unnormalised quantum many-worlds theory, and stochastic many-worlds theory. 

\begin{theorem}[unnormalised quantum  probability rule] For an arbitrary state $\ket{v} = \sum_{n} v_n  \ket{n}$ in unnormalised quantum many-worlds theory,  $p_n = \frac{ |v_n|^2 }{ \sum_m |v_m|^2} $.\label{thm:unnormalised} \end{theorem}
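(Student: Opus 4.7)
The plan is to mirror the proof of Theorem~\ref{thm:quantum} essentially verbatim, with the single substantive modification that the total squared-norm $X = \sum_m |v_m|^2$ replaces $1$ in the counting step. Since the allowed transformations in unnormalised quantum theory are still the unitaries, every unitary manoeuvre used in the normalised proof remains available, and Lemmas~\ref{lem:equal} and \ref{lem:smaller} already cover this theory.

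First I would create working space. Fixing a target index $k$, apply a unitary that leaves worlds $0,\ldots,k$ unchanged and merges all worlds with index $>k$ into the single world $\ket{k+1}$ (this is possible because the tail vector is nonzero, as otherwise the claim is trivial for that $k$). By axiom~3, $p_k$ is unchanged, and the new state $\ket{v'}$ is bounded with $N=k+2$ non-zero components and with $\sum_{n=0}^{N-1}|v'_n|^2 = X$, since unitaries preserve the squared norm. Next, choose a large positive integer $M$ and write
\begin{equation}
\ket{v'} = \sum_{n=0}^{N-1} \sqrt{\tfrac{m_n+\epsilon_n}{M}}\, e^{i\phi_n}\ket{n},
\end{equation}
with $m_n=\lfloor M|v'_n|^2\rfloor$ and $0\le \epsilon_n<1$, so $\sum_n m_n \ge MX - N$. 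Kill the phases with a world-diagonal unitary (preserving $p_k$ by axiom~3), then apply exactly the branching unitary of equation~\eqref{eq:branching2}, producing a state containing $\sum_n m_n$ worlds of amplitude $1/\sqrt{M}$ together with at most $N$ worlds of smaller amplitude.

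Now the counting step, which is where the only real difference from Theorem~\ref{thm:quantum} appears. By Lemma~\ref{lem:equal} all amplitude-$1/\sqrt{M}$ worlds share a common probability $\delta$, and by Lemma~\ref{lem:smaller} each of the at-most-$N$ smaller-amplitude worlds has probability $\le \delta$. Since probabilities sum to $1$ and the final state contains at most $MX+N$ populated worlds, we get $(MX+N)\delta \ge 1$, hence $\delta \ge 1/(MX+N)$. Combining this with axiom~3 applied to the branching partition,
\begin{equation}
p_k \;\ge\; m_k\delta \;\ge\; \frac{M|v_k|^2-1}{MX+N},
\end{equation}
and letting $M\to\infty$ (with $N$, $X$, and $v_k$ fixed) yields $p_k \ge |v_k|^2/X$.

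The final step is to upgrade this inequality to equality via global normalisation. Since $p_k \ge |v_k|^2/X$ holds for every $k$, and $\sum_k |v_k|^2/X = 1 = \sum_k p_k$, equality must hold termwise, giving $p_k = |v_k|^2/\sum_m |v_m|^2$. The main obstacle I anticipate is simply the bookkeeping around the merging step: one must verify that the tail-merging unitary exists as a legitimate transformation in unnormalised quantum theory (it does, since its action on the complementary subspace can be filled in arbitrarily), and that it genuinely preserves $X$ and $p_k$. Beyond this, the argument is structurally identical to the normalised case, with $X$ playing the role previously played by the implicit factor $1$.
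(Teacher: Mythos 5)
Your proposal follows the paper's proof essentially step for step: merge the tail to create working space, strip phases, branch each world into $m_n=\floor{M|v'_n|^2}$ equal-amplitude worlds plus one remainder world, bound $\delta \geq 1/(MX+N)$ using Lemmas \ref{lem:equal} and \ref{lem:smaller}, and close with the normalisation $\sum_k |v_k|^2/X = 1 = \sum_k p_k$. The counting step with $X$ in place of $1$ is exactly the paper's.

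The one place where your write-up would fail as stated is the claim that you can ``apply exactly the branching unitary of equation \eqref{eq:branching2}''. That unitary is built on the partition of Theorem \ref{thm:rational}, whose blocks $\mathcal{L}_n$ contain only $M$ target worlds. In the unnormalised theory $|v'_n|^2$ can exceed $1$ (whenever $X>1$), so $m_n=\floor{M|v'_n|^2}$ can exceed $M$, and world $n$ then needs up to $m_n+1>M$ orthogonal target worlds: its images would spill into the block reserved for world $n+1$, destroying both the injective labelling and the block-diagonal structure you need to invoke axiom \ref{axiom:transformation}. The paper repairs precisely this by enlarging the blocks to size $M'=\ceil{MX}$ (the sets $\mathcal{S}'_k$, $\mathcal{L}'_k$) before defining the branching unitary. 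This is a bookkeeping fix rather than a conceptual gap, but it is the only genuinely new ingredient of this theorem relative to Theorem \ref{thm:quantum}, so your proof should state it explicitly rather than reuse \eqref{eq:branching2} verbatim.
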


\begin{proof} Note that Lemmas \ref{lem:equal} - \ref{lem:smaller} apply identically in unnormalised quantum many-worlds theory. The proof of Theorem \ref{thm:unnormalised} is the same as that for Theorem \ref{thm:quantum} up until \eqref{eq:unnormalisedstep}. Defining $X= \sum_m |v_m|^2$ and $M'=\ceil{MX}$, where $\ceil{x}$ is the smallest integer greater than or equal to $x$,  we then define  sets $\mathcal{S}'_k$ via 
\begin{equation} 
\mathcal{S}'_k = \left\{ \begin{array}{cl} \{k \} \cup \mathcal{L}_k  & \textrm{for $0 \leq k < N$} \\ \{ k \}  & \textrm{for $k \geq N+M'N$} \end{array} \right. .
\end{equation}  where
\begin{equation} 
\mathcal{L}'_k = \{ N + M'k, N + M'k+1, \ldots ,N + M'k+(M'-1) \}.
\end{equation}
Applying a unitary which acts on the partition $\mathcal{S}'_k$ with the property that 
\begin{equation} 
T\ket{n} = \frac{1}{\sqrt{m_n + \epsilon_n}} \left( \begin{array}{c} \sum_{l=0}^{m_n-1}  \ket{N+ nM'+ l} \\ \quad+ \sqrt{\epsilon_n} \ket{N+nM'+m_n} \end{array}\right)
\end{equation} 
for $0\leq n <N$. The final state is then 
\begin{equation} 
T\ket{v'''} = \frac{1}{\sqrt{M}} \sum_{n=0}^{N-1} \left( \begin{array}{c}  \sum_{l=0}^{m_n-1} \ket{N+ nM'+l}   \\ \quad+ \sqrt{\epsilon_n} \ket{N+nM'+m_n} \end{array} \right).
\end{equation}
this state is a sum of at most $MX$ worlds with equal amplitude $\frac{1}{\sqrt{M}}$ and at most $N$ worlds with smaller amplitude. If we denote the  probability of one of the  amplitude $\frac{1}{\sqrt{M}}$ worlds by $\delta$ as before, then it follows from  lemma \ref{lem:equal} and  lemma \ref{lem:smaller} that  $\delta \geq \frac{1}{MX+N}$. Hence from axiom \ref{axiom:transformation}, 
 \begin{align} 
 p_k &\geq m_k \delta \nonumber \\
 & \geq \frac{m_k}{XM+N} \nonumber \\
 & \geq \frac{M|v_k|^2 - 1}{XM+N} \nonumber \\
 & \geq  \frac{|v_k|^2}{X} - \frac{\frac{N}{X}|v_k|^2+1}{MX+N}. 
\end{align} 
As this holds for arbitrarily large $M$ (for fixed $N$ and $\ket{v}$) it must be the case that $p_k \geq \frac{|v_k|^2}{X} $. However, as this applies to each $k$, given the total normalisation of the probabilities, the only possible solution is $p_k = \frac{|v_k|^2}{X} =  \frac{ |v_n|^2 }{ \sum_m |v_m|^2}$ for all $k$. \end{proof}

\begin{theorem}[stochastic probability rule] For an arbitrary state $\ket{v} = \sum_{n} v_n  \ket{n}$ in stochastic many-worlds theory,  $p_n =v_n $.\label{thm:stochastic} \end{theorem}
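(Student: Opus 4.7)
The plan is to mirror the proof of Theorem \ref{thm:quantum}, exploiting the fact that in stochastic many-worlds theory the amplitudes are non-negative reals summing to 1, so there are no phase factors to strip out and no square roots to track. Fix a target index $k$ and a large positive integer $M$. First I would compress the tail by applying the stochastic map that fixes $\ket{0},\ldots,\ket{k}$ and sends every $\ket{n}$ with $n>k$ to $\ket{k+1}$. This is a legitimate stochastic transformation (all columns sum to 1) and it acts separately on the partition with parts $\{0\},\ldots,\{k\},\{k+1,k+2,\ldots\}$, so by axiom \ref{axiom:transformation} the probability $p_k$ is unchanged and the resulting bounded state $\ket{v'}$ has $N=k+2$ non-zero amplitudes with $v'_k = v_k$.

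Next I would write $v'_n = (m_n+\epsilon_n)/M$ with $m_n = \floor{Mv'_n}$ and $0 \leq \epsilon_n < 1$, and then apply a branching stochastic map on the same partition $\mathcal{S}_k = \{k\}\cup \mathcal{L}_k$ used in Theorem \ref{thm:rational}, sending each $\ket{n}$ (for $0 \leq n < N$) to the combination of $m_n$ new worlds in $\mathcal{L}_n$ carrying weight $1/(m_n+\epsilon_n)$ each, plus a single remainder world $\ket{N+nM+m_n}$ carrying the leftover weight $\epsilon_n/(m_n+\epsilon_n)$. Each column sums to 1, so $T$ is stochastic, and $T\ket{v'}$ has non-zero amplitudes on at most $M$ worlds of amplitude $1/M$ together with at most $N$ worlds of smaller amplitude $\epsilon_n/M$.

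At this point Lemma \ref{lem:equal} forces all the amplitude-$1/M$ worlds to share a common probability $\delta$, Lemma \ref{lem:smaller} bounds each smaller-amplitude world's probability above by $\delta$, and the normalization $\sum_n p_n = 1$ then yields $\delta \geq 1/(M+N)$. Pulling back through axiom \ref{axiom:transformation} along the part $\mathcal{S}_k$ gives $p_k \geq m_k \delta \geq (Mv_k - 1)/(M+N) \geq v_k - (Nv_k + 1)/(M+N)$. Letting $M \to \infty$ for fixed $N$ and $v_k$ yields $p_k \geq v_k$, and since $\sum_n p_n = 1 = \sum_n v_n$ this inequality must be equality for every $k$.

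The main obstacle is entirely one of bookkeeping: I need to verify that each merging and branching map I invoke is genuinely stochastic (non-negative entries with column sums equal to 1) and respects the partition to which axiom \ref{axiom:transformation} is applied. Conceptually nothing new arises beyond Theorem \ref{thm:quantum}; the argument is essentially the $\ell^1$ translation of the $\ell^2$ one, with $v_n$ playing the role that $|v_n|^2$ plays in the quantum case.
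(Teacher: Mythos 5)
Your proposal is correct and follows essentially the same route as the paper's own proof: tail-merging into a bounded state, the floor decomposition $v'_n = (m_n+\epsilon_n)/M$, the branching stochastic map on the partition $\mathcal{S}_k$, Lemmas \ref{lem:equal} and \ref{lem:smaller} to get $\delta \geq \frac{1}{M+N}$, and the limit $M \to \infty$ plus normalisation to conclude $p_k = v_k$. No substantive differences to report.
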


\begin{proof}We first transform the initial state into a bounded state to create some working space.  In order to determine a particular probability $p_k$, we first  perform a  transformation $T$ given by
\begin{equation}  
T \ket{n} = \left\{ \begin{array}{cl}  \ket{ n}  & \textrm{if $0 \leq n  \leq k$}  \\  \ket{ k+1}  & \textrm{if $n  > k$} \end{array} \right.
\end{equation}   
which acts separately on each of the worlds 0 to $k$, and collectively on the rest (merging them into a single world). This gives a bounded state containing $N=k+2$ worlds with the property that $p_k'=p_k$ from axiom \ref{axiom:transformation}. Next, we pick a large non-negative integer $M$, and write the transformed state $\ket{v'}$ as 
\begin{equation} 
\ket{v'} = \sum_{n=0}^{N-1} \frac{ m_n + \epsilon_n}{M}  \ket{n} 
\end{equation} 
in which $m_n= \floor{M v_n }$ and  $\epsilon_n = M v_n  - \floor{M v_n }$,  where $\floor{x}$ is the floor function.  Hence each $m_n$ is an integer and $0\leq   \epsilon_n  <1$. We then perform a transformation which branches each world in the original state into $m_n$ or $m_n +1$ worlds in the final state (depending on whether $\epsilon_n = 0$) via a stochastic transformation acting on the  partition of worlds $\mathcal{S}_k$ as in  theorem \ref{thm:rational}, 
\begin{equation} 
\mathcal{S}_k = \left\{ \begin{array}{cl} \{k \} \cup \mathcal{L}_k  & \textrm{for $0 \leq k < N$} \\ \{ k \}  & \textrm{for $k \geq N+M$} \end{array} \right. .
\end{equation}  where
\begin{equation} 
\mathcal{L}_k = \{ N + Mk, N + Mk+1, \ldots ,N + Mk+(M-1) \} 
\end{equation}
with the property that 
\begin{equation} \label{eq:branching2} 
T\ket{n} = \frac{1}{m_n + \epsilon_n} \left( \begin{array}{c} \sum_{l=0}^{m_n-1}  \ket{N+ nM+ l} \\ \quad+ \epsilon_n \ket{N+nM+m_n} \end{array}\right)
\end{equation} 
for $0\leq n <N$. The final state is then 
\begin{equation} 
T\ket{v'} = \frac{1}{M} \sum_{n=0}^{N-1} \left( \begin{array}{c}  \sum_{l=0}^{m_n-1} \ket{N+ nM+l}   \\ \quad+ \epsilon_n \ket{N+nM+m_n} \end{array} \right).
\end{equation}
This is a superposition of  at most $M$ worlds of equal amplitude $\frac{1}{M}$ and at most $N$ worlds with smaller amplitude. Given lemma \ref{lem:equal} and  lemma \ref{lem:smaller}, if we denote the probability of one of the  amplitude $\frac{1}{M}$ worlds by $\delta$ and consider the total probability it follows that $\delta \geq \frac{1}{M+N}$. From axiom \ref{axiom:transformation} it therefore follows that  
 \begin{align} 
 p_k &\geq m_k \delta \nonumber \\
 & \geq \frac{m_k}{M+N} \nonumber \\
 & \geq \frac{Mv_k - 1}{M+N} \nonumber \\
 & \geq  v_k- \frac{N v_k+1}{M+N}. 
\end{align} 
As this holds for arbitrarily large $M$ (for fixed $N$ and $v_k$) it must be the case that $p_k \geq v_k $. However, as this apples to each $k$, given the total normalisation of the state  $\sum_n v_n = 1 = \sum_n p_n$, the only possible solution is $p_k = v_k$ for all $k$.
 \end{proof}

Finally, we show that discrete many-worlds theory admits no probability measure consistent with the axioms. 

\begin{theorem}[discrete theory admits no quantum  probability] There is no probability rule consistent with the axioms for discrete probability theory \end{theorem}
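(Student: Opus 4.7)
The plan is to produce a concrete contradiction by exhibiting a single initial state and a single allowed transformation for which Lemma \ref{lem:equal} (equal amplitudes give equal probabilities), combined with axiom \ref{axiom:amplitude} (zero amplitudes give zero probability) and axiom \ref{axiom:transformation} (partition-preserving transformations preserve partial sums of probabilities), forces two incompatible values for the same probability in the final state. Since axiom \ref{axiom:state} requires the rule to assign a unique probability to each state, the resulting numerical clash will rule out the existence of any probability rule.

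Concretely, I would take the initial state to be $\ket{v} = \ket{0} + \ket{1}$, which is an allowed state of the discrete theory since each amplitude is a non-negative integer and $\sum_n v_n = 2 < \infty$. I would then consider the linear map $T$ defined by $T\ket{1} = \ket{1} + \ket{2}$ and $T\ket{n} = \ket{n}$ for $n \neq 1$. Every matrix element $T_{ij}$ is either $0$ or $1$, and every column sum is $1$, so $T$ is an allowed discrete transformation, and $\ket{v'} = T\ket{v} = \ket{0} + \ket{1} + \ket{2}$ is an allowed state. Before the transformation, Lemma \ref{lem:equal} applies since the theory admits all permutations of worlds (each permutation has $T_{ij} \in \{0,1\}$ with unit column sums), and so from $v_0 = v_1$ together with $p_n = 0$ for $n \geq 2$ (axiom \ref{axiom:amplitude}) and $\sum_n p_n = 1$, I obtain $p_0 = p_1 = \tfrac{1}{2}$.

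Now I would extract two inconsistent values for $p_0'$. First, the partition $\{\{0\},\{1,2\},\{3\},\{4\},\ldots\}$ is preserved by $T$, because $T_{ij}$ vanishes across the blocks, so axiom \ref{axiom:transformation} applied to the $\{0\}$ block gives $p_0' = p_0 = \tfrac{1}{2}$. Second, applying Lemma \ref{lem:equal} directly to the final state $\ket{v'}$, whose three non-zero amplitudes are all equal to $1$, together with axiom \ref{axiom:amplitude} and normalisation, forces $p_0' = p_1' = p_2' = \tfrac{1}{3}$. Since $\tfrac{1}{2} \neq \tfrac{1}{3}$, no assignment of a probability rule to the state $\ket{v'}$ can satisfy all the axioms simultaneously.

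There is no real obstacle here, only a few bookkeeping checks that I would make explicit: that Lemma \ref{lem:equal} is available in the discrete theory (it is, since all world-permutations are allowed), that the swap auxiliaries used in the lemma can still be realised on bounded states of the discrete theory (they can, by the same permutation observation), and that the specific $T$ above is literally within the allowed class (which is immediate from its matrix elements being non-negative integers with finite column sums). Given these, the contradiction is unavoidable and completes the theorem.
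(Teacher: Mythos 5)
Your proposal is correct and is essentially identical to the paper's own proof: the same state $\ket{0}+\ket{1}$, the same branching map $T\ket{1}=\ket{1}+\ket{2}$, the same partition isolating world $0$, and the same use of Lemma \ref{lem:equal} on the initial and final states to force the contradiction $\tfrac{1}{2}=p_0=p_0'=\tfrac{1}{3}$. One trivial bookkeeping slip: the column of $T$ corresponding to $j=1$ has sum $2$, not $1$, but this is harmless since the discrete theory only requires non-negative integer entries with finite column sums, so $T$ is allowed exactly as you conclude.
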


\begin{proof} Consider the initial state $\ket{v} = \ket{0} + \ket{1}$ and the transformation 
\begin{align} 
T \ket{1} &= \ket{1} + \ket{2} \nonumber \\ 
T \ket{n} &= \ket{n}   \qquad \textrm{if $n \neq 1$},   
\end{align} 
giving 
\begin{equation} 
T \ket{v} = \ket{0} + \ket{1}+ \ket{2}
\end{equation} 
This transformation is consistent with a partitioning of the worlds such that worlds $1$ and $2$ are in one part, and each other world is in its own part. Hence from axiom \ref{axiom:transformation}, we have $p_0 = p_0'$, and $p_1' + p_2' = p_1$. However, discrete many-worlds theory admits all permutations  as allowed transformations, so Lemma \ref{lem:equal} also applies and worlds with equal amplitude must have equal probability. Hence $p_0 = p_1 = \frac{1}{2} $ and $p_0' = p_1' = p_2' = \frac{1}{3}$. This creates a contradiction with $p_0 = p_0'$, hence no probability rule satisfying the axioms exists. 
\end{proof}

\end{document}